\newtheorem{lemma}{Lemma}
\newtheorem{definition}{Definition}
\newtheorem{remark}{Remark}
\newtheorem{th1}{Theorem}
\newtheorem{col}{Corollary}
\newcommand{\Var}{\mathrm{Var}}
\begin{document}

\begin{frontmatter}

\title[Metric Distributional Discrepancy]{Metric Distributional Discrepancy in Metric Space}

\begin{aug}
    \author{\inits{W.}\fnms{Wenliang}
 \snm{Pan}},\thanksref{t1}
    \address{Faculty of Information Technology, Macau University of Science and Technology, Macao, China\\
    School of Mathematics, Sun Yat-sen University, Guangzhou,China }
    \author{\inits{Y.}\fnms{Yujue} \snm{Li}},\thanksref{t1}\thankstext{t1}{These authors contributed equally to this work}
    \address{School of Mathematics, Sun Yat-sen University, Guangzhou,China }
    \author{\inits{J.}\fnms{Jianwu} \snm{Liu}},
    \address{School of Mathematics, Sun Yat-sen University, Guangzhou,China}
    \author{\inits{P.}\fnms{Pei} \snm{Dang}}
    \address{Faculty of Information Technology, Macau University of Science and Technology, Macao, China}
    \and
    \author{\inits{W.}\fnms{Weixiong} \snm{Mai};\thanksref{t2}
    \ead[label=e3]{wxmai@must.edu.mo}}
    \address{Macao Center for Mathematical Sciences, Macau University of Science and Technology, Macao, China\\
    \printead{e3}}
    \thankstext{t2}{Corresponding author.}
    \author{\inits{t.} for the Alzheimer’s Disease Neuroimaging Initiative}
\end{aug}
%

\begin{abstract}
Independence analysis is an indispensable step before regression analysis to find out the essential factors that influence the objects. With many applications in machine Learning, medical Learning and a variety of disciplines, statistical methods of measuring the relationship between random variables have been well studied in vector spaces. However, there are few methods developed to verify the relation between random elements in metric spaces. In this paper, we present a novel index called metric distributional discrepancy (MDD) to measure the dependence between a random element $X$ and a categorical variable $Y$, which is applicable to the medical image and related variables. The metric distributional discrepancy statistics can be considered as the distance between the conditional distribution of $X$ given each class of $Y$ and the unconditional distribution of $X$. MDD enjoys some significant merits compared to other dependence-measures. For instance, MDD is zero if and only if $X$ and $Y$ are independent. MDD test is a distribution-free test since there is no assumption on the distribution of random elements. Furthermore, MDD test is robust to the data with heavy-tailed distribution and potential outliers. We demonstrate the validity of our theory and the property of the MDD test by several numerical experiments and real data analysis.
\end{abstract}


\begin{keyword}
\kwd{metric distributional discrepancy}
\kwd{random element}
\kwd{metric space}
\kwd{distribution-free test}
\end{keyword}


\end{frontmatter}

\section{Introduction}
In view of the ever-growing and complex data in today's scientific world, there is an increasing need for a generic method to deal with these datasets in diverse application scenarios. Non-Euclidean data, such as brain imaging, computational biology, computer graphics and computational social sciences and among many others \cite{kong, Kendall, bookstein1996biometrics, muller, balasundaram2011clique, overview}, arises in many domains. For instance, images and time-varying data can be presented as functional data that are in the form of functions \cite{csenturk2011varying, csenturk2010functional, horvath2012inference}. The sphere, Matrix groups, Positive-Definite Tensors and shape spaces are also included as manifold examples \cite{rahman2005multiscale, shi2009intrinsic, lin2017extrinsic}. It is of great interest to discover the associations in those data. Nevertheless, many traditional analysis methods that cope with data in Euclidean spaces become invalid since non-Euclidean spaces are inherently nonlinear space without inner product. The analysis of these Non-Euclidean data presents many mathematical and computational challenges.

One major goal of statistical analysis is to understand the relationship among random vectors, such as measuring a linear/ nonlinear association between data, which is also a fundamental step for further statistical analysis (e.g., regression modeling). Correlation is viewed as a technique for measuring associations between random variables. A variety of classical methods has been developed to detect the correlation between data. For instance, Pearson's correlation and canonical correlation analysis (CCA) are powerful tools for capturing the degree of linear association between two sets of multi-variate random variables \cite{benesty2009pearson, fukumizu2005consistency, kim2014canonical}. 
In contrast to Pearson's correlation, Spearman's rank correlation coefficient, as a non-parametric measure of rank correlation, can be applied in non-linear conditions \cite{sedgwick2014spearman}. 

However, statistical methods for measuring the association between complex structures of Non-Euclidean data have not been fully accounted in the methods above. In metric space, \cite{frechet} proposed a generalized mean in metric spaces and a corresponding variance that may be used to quantify the spread of the distribution of metric spaces valued random elements. However, in order to guarantee the existence and uniqueness of Fr\'{e}chet mean, it requires the space should be with negative sectional curvature. While in a positive sectional curvature space, the extra conditions such as bound on the support and radial distribution are required \cite{charlier2013necessary}. Following this, more nonparametric methods for manifold-valued data was developed. For instance, A Riemannian CCA model was proposed by \cite{kim2014canonical}, measuring an intrinsically linear association between two manifold-valued objects. Recently, \cite{articleSze} proposed distance correlation to measure the association between random vectors. After that, \cite{article2009} introduced Brownian covariance and showed it to be the same as distance covariance and \cite{lyons2013distance} extended the 
distance covariance to metric space under the condition that the space should be of strong negative type. Pan et.al \cite{pan2018ball,pan2019ball} introduced the notions of ball divergence and ball covariance for Banach-valued random vectors. These two notions can also be extended to metric spaces but by a less direct approach. Note that a metric space is endowed with a distance, it is worth studying the behaviors of distance-based statistical procedures in metric spaces.

In this paper, we extend the method in \citep{cui2018a} and propose a novel statistics in metric spaces based on \cite{wang2021nonparametric}, called metric distributional discrepancy (MDD), which considers a closed ball with the defined center and radius. We perform a powerful independence test which is applicable between a random vector $X$ and a categorical variable $Y$ based on MDD. The MDD statistics can be regarded as the weighted average of Cram\'{e}r-von Mises distance between the conditional distribution of $X$ given each class of $Y$ and the unconditional distribution of $X$. Our proposed method has the following major advantages, (i) $X$ is a random element in metric spaces. (ii) MDD is zero if and only if $X$ and $Y$ are statistically independent. (iii) It works well when the data is in heavy-tailed distribution or extreme value since it does not require any moment assumption. {Compared to distance correlation, MDD as an alternative dependence measure can be applied in the metric space which is not of strong negative type. Unlike ball correlation, MDD gets rid of unnecessary ball set calculation of $Y$ and has higher test power, which is shown in the simulation and real data analysis. }

The organization of the rest of this paper is as follows. In section \ref{secsta}, we give the definition and theoretical properties of MDD, and present results of monte carlo simulations in section \ref{secexp} and experiments on two real data analysis in section \ref{realdata}. Finally, we draw a conclusion in section \ref{conclu}.

\section{Metric distributional discrepancy}
\label{secsta}
\subsection{Metric distributional discrepancy statistics}
For convenience, we first list some notations in the metric space. The order pair $(\mathcal{M},d)$ denotes a $metric$ $space$ if $\mathcal{M}$ is a set and $d$ is a $metric$ or $distance$ on $\mathcal{M}$. Given a metric space $(\mathcal{M},d)$, let $\bar{B}(x, y) = \{ v : d(x, v) \leq r\}$ be a closed ball with the center $x$ and the radius $ r = d(x, y)$.  

Next, we define and illustrate the metric distributional discrepancy (MDD) statistics for a random element and a categorical one. Let $X$ be a random element and $Y$ be a categorical random variable with $R$ classes where $Y = \{y_1,y_2, \ldots, y_R\}$. Then, we let $X''$ be a copy of random element $X$, $F(x, x') = P_{X''}\{X''\in\bar{B}(x, x')\}$ be the unconditional distribution function of $X$, and $F_r(x, x')= P_{X''}\{X''\in\bar{B}(x, x')|Y = y_r\}$ be the conditional distribution function of $X$ given $Y = y_r$. The $MDD(X|Y)$ can be represented as the following quadratic form between $F(x, x')$ and $F_r(x, x')$,
 
\begin{equation}
\label{mvStat}
MDD(X|Y) = \sum_{r=1}^R p_r \int [F_r(x, x')-F(x, x')]^2d\nu(x)d\nu(x'),
\end{equation}
where $p_r = P(Y = y_r)> 0$ for $r = 1, \ldots, R$.

We now provide a consistent estimator for $MDD(X|Y)$. Suppose that $\{(X_i,Y_i) : i = 1, \ldots, n\}$ with the sample size $n$ are $i.i.d.$ samples randomly drawn from the population distribution of $(X,Y)$. $n_r = \sum_{i=1}^nI(Y_i=y_r)$ denotes the sample size of the rth class and $\hat{p}_r = n_r/n$ denotes the sample proportion of the rth class, where $I(\cdot)$ represents the indicator function. Let $\hat{F_r}(x,x')= \frac{1}{n_r}\sum_{k=1}^nI(X_k\in \bar{B}(x,x'),Y_k=y_r),$ and $\hat{F}(x,x') = \frac{1}{n}\sum_{k=1}^nI(X_k \in \bar{B}(x,x'))$. The estimator of $MDD(X|Y)$ can be obtained by the following statistics

\begin{equation}
\begin{aligned}
\label{stats}
\widehat{MDD}(X|Y)& =\frac{1}{n^2}\sum_{r=1}^R\sum_{i=1}^n\sum_{j=1}^n
\hat{p}_r[\hat{F}_r(X_i,X_j)-\hat{F}(X_i,X_j)]^2.
\end{aligned}
\end{equation}

\subsection{Theoretical properties}
In this subsection, we discuss some sufficient conditions for the metric distributional discrepancy and its theoretical properties in Polish space, which is a separable completely metric space. First, to obtain the property of $X$ and $Y$ are independent if and only if $ MDD (X\mid Y ) = 0$, we introduce an important concept named directionally $(\epsilon, \eta, L)$-limited \cite{federer2014}.

\begin{definition}\label{def}
A metric $d$ is called directionally $(\epsilon, \eta, L)$-limited at the subset $\mathcal{A}$ of $\mathcal{M}$, if $\mathcal{A} \subseteq \mathcal{M}$, $\epsilon > 0, 0 < \eta \leq 1/3$, $L$ is a positive integer and the following condition holds: if for each $a \in A$, $D \subseteq A \cap \bar{B}(a, \epsilon)$ such that $d(x, c) \geq \eta d(a, c)$ whenever $ b, c \in D (b \neq c), x \in M $ with 
$$d(a, x) = d(a, c), d(x, b) = d(a, b) - d(a, x),$$
then the cardinality of $D$ is no larger than $L$.
\end{definition}

 There are many metric spaces satisfying directionally $(\epsilon, \eta, L)$-limited, such as a finite dimensional Banach space, a Riemannian manifold of class $\geq 2$. However, not all metric spaces satisfy Definition \ref{def}, such as an infinite orthonormal set in a separable Hilbert space $H$, which is verified in \cite{wang2021nonparametric}.

\begin{th1}\label{the1}
Given a probability measure $\nu$ with its support $supp\{\nu\}$ on $(\mathcal{M}, d)$. Let $X$ be a random element with probability measure $\nu$ on $\mathcal{M}$ and $Y$ be a categorical random variable with $R$ classes $\{y_1,y_2,\dots,y_R\}$.
Then $X$ and $Y$ are independent if and only if $ MDD (X\mid Y ) = 0$ if the metric $d$ is directionally $(\epsilon, \eta, L)$-limited at $supp\{\nu\}$.
\end{th1} 

Theorem \ref{the1} introduces the necessary and sufficient conditions for $MDD(X|Y) = 0$ when the metric is directionally $(\epsilon, \eta, L)$-limited at the support set of the probability measure. Next, we extend this theorem and introduce Corollary \ref{col1}, which presents reasonable conditions on the measure or on the metric.

\begin{col}\label{col1}
(a) \textbf{Measure Condition}: For $ \forall~\epsilon > 0$, there exists $\mathcal{U} \subset \mathcal{M}$ such that $\nu(\mathcal{U}) \geq 1 - \epsilon$ and the
metric $d$ is directionally $(\epsilon, \eta, L)$-limited at $\mathcal{U}$. Then $X$ and $Y$ are independent if and only if $ MDD (X\mid Y ) = 0$.\\
(b) \textbf{Metric Condition}: Given a point $x\in\mathcal{M}$, we define a projection on $\mathcal{M}_k$, $\pi_k(\cdot): \mathcal{M} \to \mathcal{M}_k$ and $\pi_k(x)=x_k$. For a set $A\subset\mathcal{M}$, define $\pi_k(A)=\bigcup\limits_{x\in A}\{\pi_k(x)\}$. There exist $\{(\mathcal{M}_{l}
, d)\}^{\infty}_{l= 1} $ which are the increasing subsets of $\mathcal{M}$, where each $\mathcal{M}_l$
is a Polish space satisfying the directionally-limited condition and their closure $\overline{\bigcup^{\infty}_{l=1}\mathcal{M}_l} =\mathcal{M}$. For every $x \in \mathcal{M}$, $\pi_k(x)$ is unique such that $d(x, \pi_l(x)) = \inf_{z\in\mathcal{M}_l} d(x, z)$ and
$\pi_l|_{\mathcal{M}_{l'}} \circ \pi_{l'} = \pi_l$ if $l' > l$. Then $X$ and $Y$ are independent if and only if $ MDD (X\mid Y ) = 0$.
\end{col}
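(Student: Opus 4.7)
The easy direction is immediate in both parts: if $X$ and $Y$ are independent, then $F_r(x,x')=F(x,x')$ identically, so the integrand vanishes and $MDD(X\mid Y)=0$. I focus on the converse. Suppose $MDD(X\mid Y)=0$. Since each $p_r>0$ and the integrand is nonnegative, for every $r$ we have $\nu_r(\bar{B}(x,x'))=\nu(\bar{B}(x,x'))$ for $\nu\times\nu$-almost every $(x,x')$, where $\nu_r$ denotes the conditional law of $X$ given $Y=y_r$. Theorem~\ref{the1} upgrades this ball-level equality to $\nu_r=\nu$ whenever the directional condition holds on $supp\{\nu\}$; each part of the corollary is a localization of this principle, and the non-trivial direction is always to obtain $\nu_r=\nu$ from the a.e.\ ball equality.

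For part (a), I would exhaust the support by directionally limited sets. For each $n$ pick $\mathcal{U}_n$ with $\nu(\mathcal{U}_n)\geq 1-1/n$ and $d$ directionally $(\epsilon_n,\eta,L)$-limited at $\mathcal{U}_n$, with $\epsilon_n\downarrow 0$; replacing $\mathcal{U}_n$ by $\bigcup_{k\leq n}\mathcal{U}_k$, I may assume the sequence is increasing with $\nu(\bigcup_n\mathcal{U}_n)=1$. Because $p_r\nu_r(A)\leq \nu(A)$, we have $\nu_r\ll\nu$ and hence $\nu_r(\bigcup_n\mathcal{U}_n)=1$ as well. Running the argument of Theorem~\ref{the1} inside each $\mathcal{U}_n$ forces $\nu_r$ and $\nu$ to agree on Borel subsets of $\mathcal{U}_n$, and continuity of measure as $n\to\infty$ extends the agreement to all Borel sets, giving $X\perp Y$.

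For part (b), I would push the problem into each $\mathcal{M}_l$ via the projections. Since $\overline{\bigcup_l\mathcal{M}_l}=\mathcal{M}$ and $\pi_l(x)$ is the unique nearest point in $\mathcal{M}_l$, one has $d(x,\pi_l(x))\to 0$, so $\pi_l(X)\to X$ almost surely and the joint laws $(\pi_l(X),Y)$ converge weakly to $(X,Y)$. Independence is preserved under weak limits, so it suffices to prove $\pi_l(X)\perp Y$ for every $l$, which I would obtain by applying Theorem~\ref{the1} on the directionally-limited Polish space $\mathcal{M}_l$. The key structural identity is that for $a,b\in\mathcal{M}_l$ the closed ball of $\mathcal{M}_l$ equals $\bar{B}(a,b)\cap\mathcal{M}_l$; combining this with the compatibility $\pi_l\circ\pi_{l'}=\pi_l$ for $l'>l$ lets me express pushforward ball probabilities on $\mathcal{M}_l$ in terms of $\nu_r$ and $\nu$, thereby transporting the $\nu\times\nu$-a.e.\ ball equality into the analogous equality for $(\pi_l)_*\nu_r$ and $(\pi_l)_*\nu$, from which Theorem~\ref{the1} concludes.

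The main obstacle I anticipate is twofold. First, the heart of Theorem~\ref{the1} — namely that a.e.\ agreement on the ball class $\{\bar{B}(x,x')\}$ uniquely determines a measure under the directional condition — must be robust enough to run locally on each $\mathcal{U}_n$ (resp.\ $\mathcal{M}_l$) rather than only on the ambient space, which requires inspecting the Theorem~\ref{the1} proof and localizing its covering argument. Second, in (b) the $\nu\times\nu$-exceptional set on which ball equality may fail must be pulled back through $\pi_l$ to a $(\pi_l)_*\nu\times(\pi_l)_*\nu$-null set on $\mathcal{M}_l$; the uniqueness of the nearest-point projection, rather than mere existence of some measurable selection, seems to be precisely what makes this pullback measurable and null-preserving.
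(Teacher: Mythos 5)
The forward direction and the reduction to ``a.e.\ equality of ball probabilities implies $\nu_r=\nu$'' are fine, and you have correctly identified that this measure-determination step is the entire content of the corollary. Be aware, though, that the paper does not localize anything itself: its proof is a one-line deferral, quoting Corollary 1(a) and 1(b) of \cite{wang2021nonparametric}, which are precisely the statements that a.e.\ agreement on the ball class determines a Borel probability measure under the Measure Condition, respectively the Metric Condition. Your proposal instead tries to rebuild that determination result from Theorem \ref{the1} by localization, and this is where the gaps are.

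In part (a), the step ``replacing $\mathcal{U}_n$ by $\bigcup_{k\leq n}\mathcal{U}_k$, I may assume the sequence is increasing'' is not justified: in Definition \ref{def} the set $D$ ranges over subsets of $A\cap\bar{B}(a,\epsilon)$, so for $A=\mathcal{U}_1\cup\mathcal{U}_2$ the set $D$ may mix points from both pieces and no uniform cardinality bound $L$ follows from the bounds on the individual $\mathcal{U}_k$. One can drop the union and work on each $\mathcal{U}_n$ separately, but then the covering/differentiation argument behind Theorem \ref{the1} only controls the restrictions of $\nu_r$ and $\nu$ to $\mathcal{U}_n$, i.e.\ gives agreement up to total-variation error of order $1/n$ (note $\nu_r(\mathcal{U}_n)\geq 1-1/(np_r)$, not $1-1/n$), and this quantitative bookkeeping is exactly what the cited Corollary 1(a) supplies. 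In part (b) the transport step fails as stated: your hypothesis is $\nu_r(\bar{B}(x,x'))=\nu(\bar{B}(x,x'))$ for $\nu\otimes\nu$-a.e.\ $(x,x')$, whereas applying Theorem \ref{the1} to $(\pi_l)_*\nu_r$ and $(\pi_l)_*\nu$ on $\mathcal{M}_l$ requires $(\pi_l)_*\nu_r\bigl(\bar{B}(a,b)\cap\mathcal{M}_l\bigr)=(\pi_l)_*\nu\bigl(\bar{B}(a,b)\cap\mathcal{M}_l\bigr)$ for a.e.\ $a,b\in\mathcal{M}_l$. Since $(\pi_l)_*\nu\bigl(\bar{B}(a,b)\cap\mathcal{M}_l\bigr)=\nu\bigl(\{x:d(\pi_l(x),a)\leq d(a,b)\}\bigr)$ and the set $\{x:d(\pi_l(x),a)\leq d(a,b)\}$ is not a closed ball of $\mathcal{M}$, the identity $\bar{B}_{\mathcal{M}_l}(a,b)=\bar{B}(a,b)\cap\mathcal{M}_l$ does not bridge the two statements, and your sketch gives no substitute. (The outer layer of (b) --- $\pi_l(X)\to X$ a.s.\ and independence passing to the weak limit --- is correct.) In short, both localizations are genuinely nontrivial; the paper outsources them to \cite{wang2021nonparametric}, and a self-contained proof would have to open up that covering argument rather than invoke Theorem \ref{the1} as a black box.
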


\begin{th1}\label{the1.2}
$\widehat{MDD}(X|Y)$ almost surely converges to $MDD(X|Y)$.
\end{th1}
Theorem \ref{the1.2} demonstrates the consistency of proposed estimator for $MDD (X\mid Y )$. Hence, $\widehat{MDD}(X|Y)$ is consistent to the metric distributional discrepancy index. Due to the property, we propose an independence test between $X$ and $Y$ based on the MDD index. We consider the following hypothesis testing:
$$
\begin{aligned}
&H_0 : X \text{ and } Y \text{ are statistically independent.} \\
vs\  &H_1: X \text{ and } Y \text{ are not statistically independent.}
\end{aligned}
$$

\begin{th1}\label{the1.3}
Under the null hypothesis $H_0$,
  \begin{equation*}
    n\widehat{MDD}(X|Y)\stackrel{d}{\rightarrow}\sum_{j=1}^\infty\lambda_j \mathcal{X}_{j}^2(1),
  \end{equation*}
where $\mathcal{X}_j^2(1)$'s, $j = 1, 2,\ldots$, are independently and identically chi-square distribution with $1$ degrees of freedom, and $\stackrel{d}{\rightarrow}$ denotes the convergence in distribution.
\end{th1}
\begin{remark}
In practical application, we can estimate the null distribution of MDD by the permutation when the sample sizes are small, and by the Gram matrix spectrum in \cite{gretton2009fast} when the sample sizes are large.
\end{remark}

\begin{th1}\label{the1.4}
    Under the alternative hypothesis $H_1$,
  \begin{equation*}
    n\widehat{MDD}(X|Y)\stackrel{a.s.}{\longrightarrow}\infty,
  \end{equation*}
where $\stackrel{a.s.}{\longrightarrow}$ denotes the almost sure convergence.
\end{th1}

\begin{th1}\label{the1.5}
    Under the alternative hypothesis $H_1$,
  \begin{equation*}
    \sqrt{n}[\widehat{MDD}(X|Y)-MDD(X|Y)]\stackrel{d}{\rightarrow}N(0,\sigma^2),
  \end{equation*}
where $\sigma^2$ is given in the appendix.
\end{th1}

\section{Numerical Studies}
\label{secexp}
\subsection{Monte Carlo Simulation}
In this section, we perform four simulations to evaluate the finite sample performance of MDD test by comparing with other existing tests: The distance covariance test (DC) \cite{articleSze} and the HHG test based on pairwise distances \cite{articleheller}. We consider the directional data on the unit sphere $R^p$, which is denoted by $S^{p-1}=\{x\in R^p:||x||_2=1\}$ for $x$ and n-dimensional data independently from Normal distribution. For all of methods, we use the permutation test to obtain p-value for a fair comparison and run simulations to compute the empirical Type-I error rate in the significance level of $\alpha =0.05$. All numerical experiments are implemented by R language. The DC test and the HHG test are conducted respectively by R package $energy$ \cite{energy} and R package $HHG$ \cite{hhg}.
\begin{table*}
\centering
\caption{Empirical Type-I error rates at the significance level 0.05 in Simulation 1}
\label{re1}
\begin{tabular}{@{}cc|c|c|c@{}}
\toprule
 &  & $X=(1,\theta,\phi)$ & $X \sim M_3(\mu,k)$ & $X = (X_1, X_2, X_3)$ \\
\midrule
 R& n &\ MDD \ \  DC \ \  HHG &\ MDD \ \  DC \ \  HHG &\ MDD   \ \  DC   \ \  HHG  \\
\midrule
             & 40           & 0.080 0.085 0.070     & 0.040 0.050 0.030  & 0.060    0.060   0.030     \\
             & 60           & 0.025 0.035 0.030     & 0.075 0.085 0.065  & 0.070    0.045   0.045      \\
2            & 80           & 0.030 0.045 0.065     & 0.035 0.035 0.030  & 0.035   0.035   0.045   \\
             & 120          & 0.040 0.050 0.050     & 0.055 0.050 0.055  & 0.025   0.025   0.025     \\
             & 160          & 0.035 0.035 0.055     & 0.045 0.040 0.050  & 0.050   0.075   0.065      \\
\midrule
             & 40           & 0.015 0.025 0.045     & 0.050 0.045 0.050  & 0.050 0.060 0.055     \\
             & 60           & 0.045 0.025 0.025     & 0.050 0.030 0.035  & 0.050 0.050 0.070       \\
5            & 80           & 0.035 0.030 0.050     & 0.060 0.070 0.060  & 0.060 0.066 0.065    \\
             & 120          & 0.040 0.040 0.035     & 0.050 0.050 0.050  & 0.045 0.030 0.070      \\
             & 160          & 0.030 0.065 0.045     & 0.050 0.060 0.040  & 0.050 0.035 0.035      \\
\bottomrule
\end{tabular}
\end{table*}

\textbf{Simulation 1} In this simulation, we test independence between a high-dimensional variable and a categorical one. We randomly generate three different types of data $X$ listed in three columns in Table \ref{re1}. For the first type in column 1, we set $p=3$ and consider coordinate of $S^2$, denoted as $(r,\theta,\phi)$ where $r = 1$ as radial distance, $\theta$ and $\phi$ were simulated from the Uniform distribution $U(-\pi,\pi)$. For the second type in column $2$, we generate three-dimensional variable from von Mises-Fisher distribution $M_3(\mu, k)$ where $\mu = (0,0,0)$ and $k = 1$. For the third type in column 3, each dimension of $X$ is independently formed from $N(0,1)$ where $X = (X_1, X_2, X_3), X_i \sim N(0,1)$. We generate the categorical random variable $Y$ from $R$ classes ${1,2, \dots, R}$ with the unbalanced proportion $p_r = P(Y = r) = 2[1 + (r-1)/(R-1)]/3R,\ r = 1,2, \dots, R$. For instance, when $Y$ is binary, $p_1 = 1/3$ and $p_2 = 2/3$ and when $R = 5, Y={1,2,3,4,5}$. Simulation times is set to 200. The sample size $n$ are chosen to be 40, 60, 80, 120, 160. The results summarized in Table \ref{re1} show that all three tests perform well in independence testing since empirical Type-I error rates are close to the nominal significance level even in the condition of small sample size.

\begin{table*}
\centering
\caption{Empirical powers at the significance level 0.05 in Simulation 2}
\label{re2}
\begin{tabular}{@{}cc|c|c|c@{}}
\toprule
 &  & $X=(1,\theta,\phi)$ & $X \sim M_3(\mu,k)$ & $X = (X_1, X_2, X_3)$ \\
\midrule
R & n&\ MDD \ \  DC \ \  HHG &\ MDD \ \  DC \ \  HHG &\ MDD \ \  DC \ \  HHG\\
 \midrule
             & 40           & 0.385 0.240 0.380    & 0.595 0.575 0.590      & 0.535 0.685 0.375  \\
             & 60           & 0.530 0.385 0.475    & 0.765 0.745 0.650      & 0.730 0.810 0.590   \\
2           & 80           & 0.715 0.535 0.735    & 0.890 0.880 0.775      & 0.865 0.930 0.755   \\
             & 120          & 0.875 0.740 0.915    & 0.965 0.955 0.905     & 0.970 0.990 0.925    \\
             & 160          & 0.965 0.845 0.995    & 1.000 1.000 0.985     & 0.995 0.995 0.980    \\

\midrule
             & 40           & 0.925 0.240 0.940    & 0.410 0.230 0.185     & 0.840 0.825 0.360     \\
             & 60           & 0.995 0.350 0.995    & 0.720 0.460 0.330     & 0.965 0.995 0.625       \\
5            & 80           & 1.000 0.450 1.000    & 0.860 0.615 0.540     & 0.995 0.990 0.850    \\
             & 120          & 1.000 0.595 1.000    & 0.990 0.880 0.820     & 1.000 0.995 0.990     \\
             & 160          & 1.000 0.595 1.000    & 0.995 0.975 0.955     & 1.000 1.000 1.000      \\
\bottomrule
\end{tabular}
\end{table*}

\textbf{Simulation 2 }
    In this simulation, we test the dependence between a high-dimensional variable and a categorical random variable when $R = 2$ or $5$ with the proportion proposed in simulation 1. In column 1, we generate $X$ and $Y$ representing radial data as follows:  
$$ 
\begin{aligned}
\text{(1)  \ } &Y= \{1,2\}, (a)\ X=(1,\theta,\phi_1 + \epsilon), \\
&\theta \sim U(-\pi,\pi),\  \phi_1\sim U(-\pi,\pi), \epsilon = 0, \\
&(b)\ X=(1,\theta,\phi_2+\epsilon),\\
 &\theta \sim U(-\pi,\pi), \phi_2 \sim U(1/5\pi,4/5\pi), \epsilon \sim t(0,1).
\end{aligned}
$$
$$ 
\begin{aligned}
\text{(2)  \   } &Y=\{1,2,3,4,5\}, X=(1,\theta,\phi_r + \epsilon),\\
 & \phi_r\sim U([-1+2(r-1)/5]\pi,(-1+2r/5) \pi), \\
&r = 1,2,3,4,5.\ \ \theta \sim U(-\pi,\pi),\  \epsilon \sim t(0,1).
\end{aligned}
$$
    For column 2, we consider von Mises-Fisher distribution. Then we set $n = 3$ and $k = 1$, and the simulated data sets are generated as follows: 
$$ 
\begin{aligned}
& (1) R = 2, \mu = (\mu_1,\mu_2) = (1,2), \\
& (2) R = 5,\mu =\{\mu_1,\mu_2,\mu_3,\mu_4,\mu_5\} = \{4,3,1,5,2\}.
\end{aligned}
$$
For column 3, $X$ in each dimension was separately generated from normal distribution $N(\mu, 1)$ to represent data in the Euclidean space. There are two choices of $R$:
$$ 
\begin{aligned}
 &(1) R = 2, \mu = (\mu_1,\mu_2) = (0, 0.6),\\
 &(2) R=5,\mu =\{\mu_1,\mu_2,\mu_3,\mu_4,\mu_5\} = \{4,3,1,5,2\}/3.
 \end{aligned}
$$
Table \ref{re2} based on 200 simulations shows that the MDD test performs well in most settings with Type-I error rate approximating 1 especially when the sample size exceeds 80. When data contains extreme value, the Type-I error rate of DC deteriorate quickly while the MDD test performs more stable. Moreover, the MDD test performs better than both DC test and HHG test in spherical space, especially when the number of class $R$ increases.

\textbf{Simulation 3 }
    In this simulation, we set $X$ with different dimensions, with the range of $\{3,6,8,10,12\}$, to test independence and dependence between a high-dimensional random variable and a categorical random variable. Respectively, I represents independence test, and II represents dependence one. We use empirical type-I error rates for I, empirical powers for II. We let sample size $n = 60$ and classes $R = 2$. Three types of data are shown in Table \ref{re3} for three columns. \\
    In column 1: 
    $$
\begin{aligned}
    X_{dim}&=(1,\theta,\phi_1+\epsilon, \dots, \phi_d + \epsilon), \\
    &d = 1,4,6,8,10, \theta \sim U(-\pi,\pi) \\
     \end{aligned}
$$
 For $\phi_i$ of two classes, 
 $$(a) \phi_i\sim U(-\pi,\pi), \epsilon = 0, $$
 $$ (b) \phi_i\sim U(-1/5\pi,4/5\pi), \epsilon \sim t(0,1),$$
 In column 2, $$X \sim M_d(\mu, k), d= 3,6,8,10,12, \mu \in \{0,2\} ,$$
    where $k=1$.\\
    In column 3, we set $$X_i=(x_{i1},x_{i2}, \dots, x_{id}), d = 3,6,8,10,12 \text{ and } x_{id}\sim N(\mu,1),$$ where $\mu\in\{0,\frac{3}{5}\}$ in dependence test.  Table \ref{re3} based on 300 simulations at $\alpha = 0.05$ shows that the DC test performs well in normal distribution but is conservative for testing the dependence between a radial spherical vector and a categorical variable. The HHG test works well for extreme value but is conservative when it comes to von Mises-Fisher distribution and normal distribution. It also can be observed that MDD test performs well in circumstance of high dimension.
\begin{table*}
\centering
\caption{Empirical Type-I error rates / Empirical powers at the significance level 0.05 in Simulation 3}
\label{re3}
\begin{tabular}{@{}cc|c|c|c@{}}
\toprule
 &  & $X_{dim}=(1,\theta,\phi_d)$ & $X\sim M_{dim}(\mu,k)$ & $X=(x_{1},x_{2}, \dots, x_{dim})$ \\
\midrule
 & dim&\ MDD \ \  DC \ \  HHG &\ MDD \ \  DC \ \  HHG &\ MDD \ \  DC \ \  HHG\\
 \midrule
             & 3           & 0.047 0.067 0.050      & 0.047 0.047 0.053      & 0.047 0.060 0.033     \\
             & 6           & 0.053 0.050 0.050      & 0.050 0.053 0.037      & 0.047 0.050 0.030       \\
I            & 8           & 0.037 0.040 0.050      & 0.037 0.047 0.050      & 0.040 0.037 0.030    \\
             & 10          & 0.050 0.053 0.040      & 0.050 0.056 0.033      & 0.037 0.050 0.060     \\
             & 12          & 0.020 0.070 0.050      & 0.053 0.047 0.047      & 0.040 0.067 0.043      \\
\midrule
             & 3           & 0.550 0.417 0.530      & 0.967 0.950 0.923         & 0.740 0.830 0.577    \\
             & 6           & 1.000 0.583 0.997      & 0.983 0.957 0.927         & 0.947 0.963 0.773      \\
II           & 8           & 1.000 0.573 1.000      & 0.967 0.950 0.920         & 0.977 0.990 0.907  \\
             & 10          & 1.000 0.597 1.000      & 0.977 0.950 0.887         & 0.990 0.993 0.913     \\
             & 12          & 1.000 0.590 1.000      & 0.957 0.940 0.843         & 0.997 1.000 0.963     \\
\bottomrule
\end{tabular}
\end{table*}

\begin{table*}
\centering
\caption{Empirical powers at the significance level 0.05 in Simulation 4}
\label{re4}
\begin{tabular}{@{}cc|c|c|c@{}}
\toprule
 &  & landmark =20   & landmark =50  & landmark =70 \\
\midrule
 & $corr$   &\ MDD \ \  DC \ \  HHG &\ MDD \ \  DC \ \  HHG  &\ MDD \ \  DC \ \  HHG \\
 \midrule
             & 0           & 0.050 0.050 0.063      & 0.047 0.047 0.050     & 0.040 0.050 0.050   \\
             & 0.05      & 0.090 0.080 0.057      & 0.097 0.073 0.087     & 0.093 0.067 0.467   \\
             & 0.10      & 0.393 0.330 0.300      & 0.347 0.310 0.237     & 0.350 0.303 0.183   \\
             & 0.15      & 0.997 0.957 0.997      & 0.993 0.943 0.970     & 0.993 0.917 0.967   \\
             & 0.20      & 1.000 1.000 1.000      & 1.000 1.000 1.000      & 1.000 1.000 1.000  \\
\bottomrule
\end{tabular}
\end{table*}

\textbf{Simulation 4 }
    In this simulation, we use the $(\cos(\theta + d/2)+\epsilon/10, \cos(\theta - d/2)+\epsilon/10) $ parametrization of an ellipse where $\theta \in (0,2\pi)$ to run our experiment, let $X$ be an ellipse shape and $Y$ be a categorical variable. The $\cos(d)$ is the parameter of correlation, which means when $\cos(d) = 0 $, the shape is a unit circle and when $\cos(d) = 1$, the shape is a straight line. We set the noise $\epsilon \sim t(2)$ and $R=2$ where $y_1=1$ represents that shape $X$ is a circle with $\cos(d) = 0$ and $y_2 = 2$ represent that shape $X$ is an ellipse with correlation $ corr = \cos(d)$. It is intuitive that, when $corr = 0$, the MDD statistics should be zero. In our experiment, we set corr = 0,0.05,0.1,0.15,0.2 and let the number of landmark comes from $\{20,50,70\}$. Sample size is set to 60 and R package $shapes$ \cite{shapes} is used to calculate the distance between shapes. Table \ref{re4} summarizes the empirical Type-I error based on 300 simulations at $\alpha = 0.05$. It shows that the DC test and HHG test are conservative for testing the dependence between an ellipse shape and a categorical variable while the MDD test works well in different number of landmarks.

\section{A Real-data Analysis}
\label{realdata}
\subsection{The Hippocampus Data Analysis}
Alzheimer's disease (AD) is a disabling neurological disorder that afflicts about $11\%$ of the population over age 65 in United States. It is an aggressive brain disease that cause dementia -- a continuous decline in thinking, behavioral and social skill that disrupts a person's ability to function independently. As the disease progresses, a person with Alzheimer's disease will develop severe memory impairment and lost ability to carry out the simplest tasks. There is no treatment that cure Alzheimer's diseases or alter the disease process in the brain.

The hippocampus, a complex brain structure, plays an important role in the consolidation of information from short-term memory to long-term memory. Humans have two hippocampi, each side of the brain. It is a vulnerable structure that gets affected in a variety of neurological and psychiatric disorders \cite{hippo2}. In Alzheimer's disease, the hippocampus is one of the first regions of the brain to suffer damage \cite{damage}. The relationship between hippocampus and AD has been well studied for several years, including volume loss of hippocampus \citep{loss, hippo1}, pathology in its atrophy \cite{patho} and genetic covariates \cite{gene}. For instance, shape changes \cite{shapechanges, diagnosis1, diagnosis2} in hippocampus are served as a critical event in the course of AD in recent years.

We consider the radical distances of hippocampal 30000 surface points on the left and right hippocampus surfaces. In geometry, radical distance, denoted $r$, is a coordinate in polar coordinate systems $(r, \theta)$. Basically, it is the scalar Euclidean distance between a point and the origin of the system of coordinates. In our data, the radical distance is the distance between medial core of the hippocampus and the corresponding vertex on the surface. The dataset obtained from the ADNI (The Alzheimer’s Disease Neuroimaging Initiative) contains 373 observations (162 MCI individuals transformed to AD and the 212 MCI individuals who are not converted to AD) where Mild Cognitive Impairment (MCI) is a transitional stage between normal aging and the development of AD \cite{petersen} and 8 covariates in our dataset. Considering the large dimension of original functional data, we firstly use SVD (Singular value decomposition) to extract top 30 important features that can explains 85\% of the total variance.

We first apply the MDD test to detect the significant variables associated with two sides of hippocampus separately at significance level $\alpha = 0.05$. Since 8 hypotheses are simultaneously tested, the Benjamini–Hochberg (BH) correction method is used to control false discover rate at 0.05, which ranks the p-value from the lowest to the highest. The statistics in \eqref{stats} is used to do dependence test between hippocampus functional data and categorical variables. The categorical variables include Gender (1=Male; 2=Female), Conversion (1=converted, 0=not converted), Handedness (1=Right; 2=Left), Retirement (1=Yes; 0=No). Then, we apply DC test and HHG test for the dataset. Note that the p-value obtained in the three methods all used permutation test with 500 times. Table \ref{hippocampi_covariates} summarizes the results that the MDD test, compared to other methods, are able to detect the significance on both side of hippocampus, which agrees with the current studies \cite{hippoage,diagnosis1,diagnosis2} that conversion and age are critical elements to AD disease. Then, we expanded our method to continuous variables, the Age and the ADAS-Cog score, which are both important to diagnosing AD disease \cite{hippoage,kong}. We discretize age and ADAS-Cog score into categorical ones by using the quartile. For instance, the factor level labels of Age are constructed as "$(54,71],(71,75],(75,80],(80,90]$", labelled as $1,2,3,4$. The result of p-values in Table \ref{hippocampi_covariates} agrees with the current research.

Next, we step forward to expand our method to genetic variables to further check the efficiency of MDD test. Some genes in hippocampus are found to be critical to cause AD, such as The apolipoprotein E gene (APOE). The three major human alleles (ApoE2, ApoE3, ApoE4) are the by-product of non-synonymous mutations which lead to changes in functionality and are implicated in AD. Among these alleles, ApoE4, named $\epsilon_4$, is accepted as a factor that affect the event of AD \cite{apoe41,apoe42,apoe43}. In our second experiment, we test the correlation between ApoE2($\epsilon_2$), ApoE3($\epsilon_3$), ApoE4($\epsilon_4$) and hippocampus. The result in the Table \ref{hippocampi_APOE} agrees with the idea that  $\epsilon_4 $ is a significant variable to hippocampus shape. Besides, due to hippocampal asymmetry, the $\epsilon_4$ is more influential to the left side one. The MDD test performs better than DC test and HHG test for both sides of hippocampus.

From the two experiments above, we can conclude that our method can be used in Eu variables, such as age, gender. It can also be useful and even better than other popular methods when it comes to genetic covariates. The correlation between genes and shape data(high dimensional data) is an interesting field that hasn't been well studied so far. There are much work to be done in the future.

Finally, we apply logistic regression to the hippocampus dataset by taking the conversion to AD as the response variable and the gender, age, hippocampus shape as predict variables. The result of regression shows that the age and hippocampus shape are significant and we present the coefficients of hippocampus shape in Figure \ref{fig1} where a blue color indicates positive regions.

\begin{table}[ht]
\centering
\caption{The p-values for correlating hippocampi data and covariates after BH correction.}

\begin{tabular}{@{}c|c|c@{}}
\toprule
    &  left &  right \\
 \midrule
   covariates  & MDD   DC  HHG & MDD   DC  HHG \\
\midrule
  Gender                              & 0.032 0.014  0.106 & 0.248 0.036 0.338 \\
  \textbf{Conversion to AD}  & 0.012 0.006 0.009 & 0.008 0.006 0.009 \\
  Handedness                      &  0.600 0.722 0.457 & 0.144 0.554 0.045 \\
  Retirement                         & 0.373 0.198 0.457 & 0.648 0.267 0.800 \\
  \textbf{Age}                        & 0.012 0.006  0.009 & 0.009 0.006 0.009 \\
    \textbf{ADAS-Cog Score} & 0.012 0.006  0.012 & 0.012 0.006 0.012 \\
 \bottomrule
\end{tabular}\label{hippocampi_covariates}
\end{table}

\begin{table}[ht]
\centering
\caption{The p-values for correlating hippocampi data and APOE covariates after BH correction.}
\begin{tabular}{@{}c|c|c@{}}
\toprule
    &  left &  right \\
 \midrule
   APOE  & MDD   DC  HHG & MDD   DC  HHG \\
\midrule
  $\epsilon_2$                          & 0.567 0.488 0.223 & 0.144 0.696 0.310\\
  $\epsilon_3$                          & 0.357 0.198 0.449 & 0.157 0.460 0.338 \\
  \textbf{$\epsilon_4$}  & \textbf{0.022}  0.206  0.036 & 0.072 0.083 0.354\\
 \bottomrule
\end{tabular}\label{hippocampi_APOE}
\end{table}

\begin{figure}
\begin{center}
\includegraphics[width=8cm]{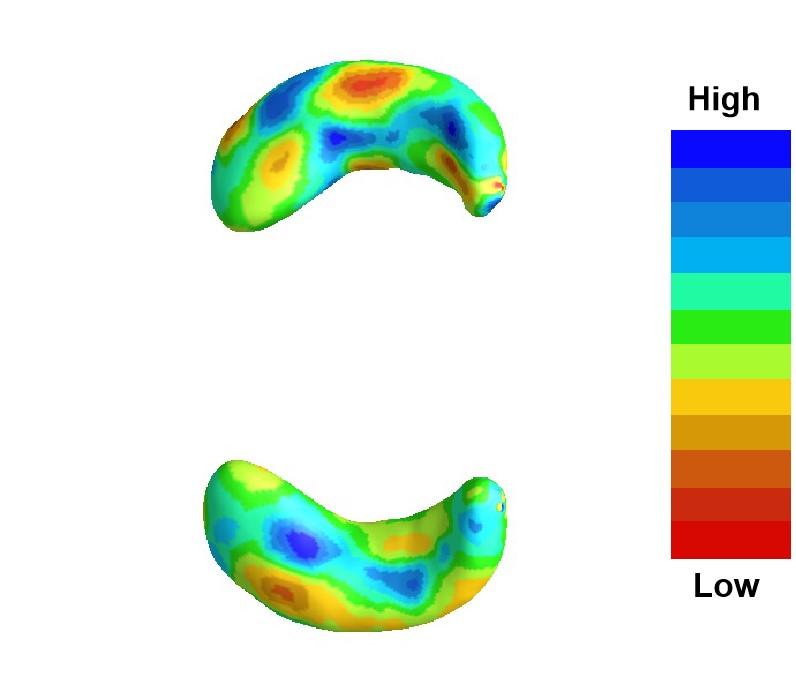}
\end{center}
\caption{The Hippocampus 3D Shape Surface with coefficients}\label{fig1}
\end{figure}

{\subsection{The corpus callosum Data Analysis}
We consider another real data, the corpus callosum (CC), which is the largest white matter structure in the brain. CC has been a structure of high interest in many neuroimaging studies of neuro-developmental pathology. It helps the hemispheres share information, but it also contributes to the spread of seizure impulses from one side of the brain to the other.
Recent research  \cite{raz2010trajectories,witelson1989hand} has investigated the individual differences in CC and their possible implications regarding interhemispheric connectivity for several years.

We consider the CC contour data obtained from the ADNI study to test the dependence between a high-dimensional variable and a random variable. In the ADNI dataset, the segmentation of the T1-weighted MRI and the calculation of the intracranial volume were done by using $FreeSurfer$ package created by \cite{dale1999cortical}, whereas the midsagittal CC area was calculated by using CCseg package. The CC data set includes 409 subjects with 223 healthy controls and 186 AD patients at baseline of the ADNI database. Each subject has a CC planar contour $Y_j$ with 50 landmarks and five covariates. We treat the CC planar contour $Y_j$ as a manifold-valued response in the Kendall planar shape space and all covariates in the Euclidean space. The Riemannian shape distance was calculated by R package $shapes$ \cite{shapes}.

 It is of interest to detect the significant variable associated with CC contour data. We applied the MDD test for dependence between five categorical covariates, gender, handedness, marital status, retirement and diagnosis at the significance level $\alpha = 0.05.$ We also applied DC test and HHG test for the CC data. The result is summarized in Table \ref{realt1}.  It reveals that the shape of CC planar contour are highly dependent on gender, AD diagnosis. It may indicate that gender and AD diagnosis are significant variables, which agree with \cite{witelson1989hand,pan2017conditional}. This result also demonstrated that the MMD test performs better to test the significance of variable gender than HHG test. 
 
 We plot the mean trajectories of healthy controls (HC) and Alzheimer's disease (AD). The similar process is conducted on the Male and Female. Both of the results are shown in Figure \ref{Fig:CCdata}.
It can be observed that there is an obvious difference of the shape between the AD disease and healthy controls. Compared to healthy controls, the spleen of
AD patients seems to be less thinner and the isthmus is more rounded. Moreover, the splenium can be observed that it is thinner in male groups than in female groups. This could be an intuitive evidence to agree with the correlation between gender and the AD disease.

\begin{table}[ht]
\centering
\caption{The p-values for correlating CC contour data and five categorical covariates after BH correction.}\label{realt1}
\begin{tabular}{ c|c|c|p{1cm} }
\toprule
  covariates & MDD & DC &HHG \\ 
\midrule
  Gender & 0.015 &0.018  & 0.222\\ 
  Handedness & 0.482 &0.499& 0.461 \\ 
  Marital Status & 0.482 &0.744 & 0.773\\ 
  Retirement & 0.482 &0.482 & 0.461\\ 
 Diagnosis & 0.015&0.018 & 0.045\\ 

\bottomrule
\end{tabular}
\end{table}

\begin{figure}
\begin{center}
\includegraphics[width=8cm]{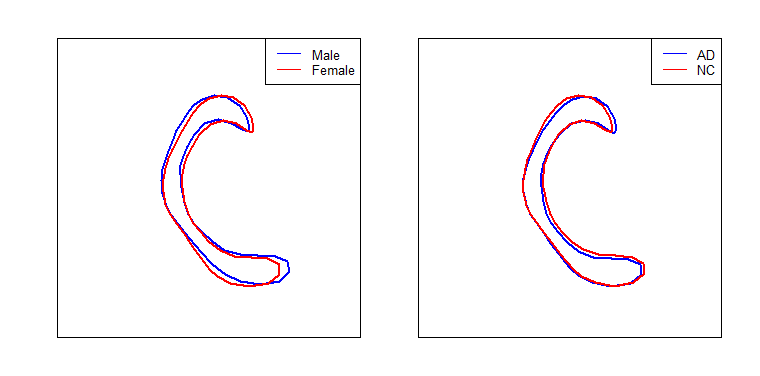}
\end{center}
\caption{The corpus callosum Data Surface}\label{Fig:CCdata}
\end{figure}
}

\section{Conclusion}
\label{conclu}
In this paper, we propose the MDD statistics of correlation analysis for Non-Euclidean data in metric spaces and give some conditions for constructing the statistics. Then, we proved the mathematical preliminaries needed in our analysis. The proposed method is robust to outliers or heavy tails of the high dimensional variables. Depending on the results of simulations and real data analysis in hippocampus dataset from the ADNI, we demonstrate its usefulness for detecting correlations between the high dimensional variable and different types of variables (including genetic variables). We also demonstrate its usefulness in another manifold-valued data, CC contour data. We plan to explore our method to variable selection methods and other regression models.

\section*{Acknowledgements}
 W.-L. Pan was supported by the Science and Technology Development Fund, Macau SAR (Project code: 0002/2019/APD), National Natural Science Foundation of China (12071494), Hundred Talents Program of the Chinese Academy of Sciences and the Science and Technology Program of Guangzhou, China (202002030129,202102080481). P. Dang was supported by the Science and Technology Development Fund, Macau SAR (Project code: 0002/2019/APD), the Science and Technology Development Fund, Macau SAR(File no. 0006/2019/A1). W.-X. Mai was supported by NSFC Grant No. 11901594.

Data collection and sharing for this project was funded by the Alzheimer's Disease Neuroimaging Initiative (ADNI) (National Institutes of Health Grant U01 AG024904) and DOD ADNI (Department of Defense award number W81XWH-12-2-0012). ADNI is funded by the National Institute on Aging, the National Institute of Biomedical Imaging and Bioengineering, and through generous contributions from the following: Alzheimer's Association; Alzheimer's Drug Discovery Foundation; Araclon Biotech; BioClinica, Inc.; Biogen Idec Inc.; Bristol-Myers Squibb Company; Eisai Inc.; Elan Pharmaceuticals, Inc.; Eli Lilly and Company; EuroImmun; F. Hoffmann-La Roche Ltd. and its affiliated company Genentech, Inc.; Fujirebio; GE Healthcare; IXICO Ltd.; Janssen Alzheimer Immunotherapy Research \& Development, LLC.; Johnson \& Johnson Pharmaceutical Research \& Development LLC.; Medpace, Inc.; Merck \& Co., Inc.; Meso Scale Diagnostics, LLC.; NeuroRx Research; Neurotrack Technologies; Novartis Pharmaceuticals Corporation; Pfizer Inc.; Piramal Imaging; Servier; Synarc Inc.; and Takeda Pharmaceutical Company. The Canadian Institutes of Health Research is providing funds to support ADNI clinical sites in Canada. Private sector contributions are facilitated by the Foundation for the National Institutes of Health (www.fnih.org). The grantee organization is the Northern California Institute for Research and Education, and the study is coordinated by the Alzheimer's Disease Cooperative Study at the University of California, San Diego. ADNI data are disseminated by the Laboratory for Neuro Imaging at the University of Southern California. We are grateful to Prof. Hongtu Zhu for generously sharing preprocessed ADNI dataset to us.

\setcounter{equation}{0}
\setcounter{subsection}{0}
\renewcommand{\theequation}{A.\arabic{equation}}
\renewcommand{\thesubsection}{A.\arabic{subsection}}
\renewcommand{\thedefinition}{A.\arabic{definition}}
\renewcommand{\thelemma}{A.\arabic{lemma}}
\section*{Appendix: Technical details}

\begin{proof}[\textbf{Proof of Theorem \ref{the1}}]


It is obvious that if $X$ and $Y$ are independent, $F_{r}(x, x^{\prime}) = F(x, x^{\prime})$ for $ \forall x, x^{\prime} \in \mathcal{M}$, then $MDD(X \mid Y) = 0$. Next, we need to prove that if $MDD(X \mid Y) = 0$, then $X$ and $Y$ are independent.

According to the definition of $MDD$, $MDD(X\mid Y)=\sum_{r=1}^{R} p_{r} \int [F_r(u, v)-F(u, v)]^{2} d \nu(u) d \nu(v)$. It is obvious that $MDD(X \mid Y) \geq 0$, so if $MDD(X \mid Y)=0$, we have $F_r(u, v) = F(u, v)$, $a.s.$ $\nu \otimes \nu$.

Given $Y=y_{r}$, define $\phi_{r}$ is a Borel probability measure of $X \mid Y=y_{r}$, and we have
$$
\phi_{r}[\bar{B}(u, d(u, v))] := F_{r}(u, v) =P(X \in \bar{B}(u, v) \mid Y=y_{r}).
$$

Because $(\mathcal{M}, d)$ is a Polish space that $d$ is directionally $(\epsilon, \eta, L)$-limited and we have $F_r(u, v) = F(u, v)$, $a.s.$ $\nu \otimes \nu$. Next, we can apply Theorem 1 in \cite{wang2021nonparametric} to get the conclusion that $\nu = \phi_{r}$ for $r = 1, \ldots, R$.

Therefore, we have $F_{r}(x, x^{\prime}) = F(x, x^{\prime})$ for $ \forall x, x^{\prime} \in \mathbb{R}$. That is, for every $x$, $x^{\prime}$ and every $r$, we have $$P(X \in \bar{B}(x, x^{\prime}) \mid Y = y_{r}) = P(X \in \bar{B}(x, x^{\prime})),$$ $i.e.$ $X$ and $Y$ are independent.
\end{proof}

\begin{proof}[\textbf{Proof of Corollary \ref{col1}}]
(a) We have $\nu(\mathcal{U}) \geq 1 - \epsilon$ and the
metric $d$ is directionally $(\epsilon, \eta, L)$-limited at $\mathcal{U}$,  we can obtain the result of independence according to Theorem \ref{the1} and Corollary 1(a) in \cite{wang2021nonparametric}.

(b) Similarly, we know that each $\mathcal{M}_l$
is a Polish space satisfying the directionally-limited condition, we also can obtain the result of independence according to  Corollary 1(b) in \cite{wang2021nonparametric}.
\end{proof}
  
\begin{proof}[\textbf{Proof of Theorem \ref{the1.2}}]

Consider
\begin{equation*}
\begin{aligned}
&\widehat{MDD}(X|Y) = \frac{1}{n^2}\sum_{r=1}^R\sum_{i=1}^n\sum_{j=1}^n \hat{p}_r[\hat{F}_r(X_i,X_j)-\hat{F}(X_i,X_j)]^2 \\
=&\frac{1}{n^2}\sum_{r=1}^R\sum_{i=1}^n\sum_{j=1}^n \hat{p}_r[\frac{1}{n}\sum_{k=1}^n \frac{I(X_k\in\bar{B}(X_i,X_j),Y_k=y_r)}{\hat{p}_r}\\
&\quad\quad\quad\quad\quad\quad\quad
-\frac{1}{n}\sum_{k=1}^n I(X_k\in\bar{B}(X_i,X_j)]^2\\
=&\sum_{r=1}^R\Big[\frac{1}{n^4\hat{p}_r}\sum_{i,j,k,k^\prime=1}^n I(X_k\in\bar{B}(X_i,X_j),Y_k=y_r)\\
&\quad\quad\quad\quad\quad\quad\quad\quad\quad
I(X_{k^\prime}\in\bar{B}(X_i,X_j),Y_{k^\prime}=y_r) \\
&+\frac{\hat{p}_r}{n^4}\sum_{i,j,k,k^\prime=1}^n I(X_k\in\bar{B}(X_i,X_j))I(X_{k^\prime}\in\bar{B}(X_i,X_j)) \\
&-\frac{1}{n^4}\sum_{i,j,k,k^\prime=1}[I(X_k\in\bar{B}(X_i,X_j),Y_k=y_r)I(X_{k^\prime}\in\bar{B}(X_i,X_j))  \\ 
&-\frac{1}{n^4}\sum_{i,j,k,k^\prime=1}I(X_{k^\prime}\in\bar{B}(X_i,X_j),Y_{k^\prime}=y_r)I(X_k\in\bar{B}(X_i,X_j))\Big]\\
=&:Q_1+Q_2+Q_3.
\end{aligned}
\end{equation*}

For $Q_1$, $\frac{1}{n^4}\sum_{i,j,k,k^\prime=1}^n I(X_k\in\bar{B}(X_i,X_j),Y_k=y_r)I(X_{k^\prime}\in\bar{B}(X_i,X_j),Y_{k^\prime}=y_r)$ is a V-statistic of order 4.
We can verify that
\begin{equation*}
\begin{aligned}
&E[I(X_k\in\bar{B}(X_i,X_j),Y_k=y_r)I(X_{k^\prime}\in\bar{B}(X_i,X_j),Y_{k^\prime}=y_r)]\\
=&E[E[I(X_k\in\bar{B}(X_i,X_j),Y_k=y_r)|X_i,X_j]\\
&\times E[I(X_{k^\prime}\in\bar{B}(X_i,X_j),Y_{k^\prime}=y_r)|X_i,X_j]]\\
=&E[E^2[I(X_k\in\bar{B}(X_i,X_j),Y_k=y_r)|X_i,X_j]]\\
=&p_r^2E[\frac{1}{p_r^2}E^2[I(X_k\in\bar{B}(X_i,X_j),Y_k=y_r)|X_i,X_j]]\\
=&p_r^2E[F_r^2(X_i,X_j)].
\end{aligned}   
\end{equation*}

Since $E|I(X_k\in\bar{B}(X_i,X_j),Y_k=y_r)I(X_{k^\prime}\in\bar{B}(X_i,X_j),Y_{k^\prime}=y_r)|\leq1<\infty$, according to Theorem 3 of \cite{lee2019u}, $\frac{1}{n^4}\sum_{i,j,k,k^\prime=1}^n I(X_k\in\bar{B}(X_i,X_j),Y_k=y_r)I(X_{k^\prime}\in\bar{B}(X_i,X_j),Y_{k^\prime}=y_r)$ almost surely converges to $p_r^2E[F_r^2(X_i,X_j)]$.
And we have $p_r\to\hat{p}_r$, and we can draw conclusion that $Q_1 \stackrel{a.s.}{\longrightarrow} \sum_{r=1}^R p_rE[ F_r^2(X_i,X_j)]$, as $n\to \infty$.

Similarly, we also have $Q_2\stackrel{a.s.}{\longrightarrow}\sum_{r=1}^R p_rE[F^2(X_i,X_j)]$ and $Q_3\stackrel{a.s.}{\longrightarrow} -2\sum_{r=1}^R p_rE[F_r(X_i,X_j)F(X_i,X_j)]$,  as $n\to \infty$.

Because $MDD(X|Y)=E[\sum_{r=1}^R p_r(F_r(X_i,X_j)-F(X_i,X_j))^2]$,
we have
$$
\widehat{MDD}(X|Y)\stackrel{a.s.}{\longrightarrow} MDD(X|Y),~n\to\infty.
$$
\end{proof}

Before we prove Theorem \ref{the1.3}, we give an lemma and some notations as follows \cite{lee2019u}. 
\begin{lemma}\label{lm6.4}
Let $V_n$ be a V-statistic of order m, where $V_n = n^{-m}\sum_{i_1=1}^n\cdots\sum_{i_m=1}^n h(X_{i_1}, \ldots, X_{i_m})$ and $h(X_{i_1} ,\ldots, X_{i_m})$ is the kernel of $V_n$. For all $1\leq i_1\leq\cdots\leq i_m\leq m$, $E[h(X_{i_1} ,\ldots, X_{i_m})]^2 < \infty$. We have the following conclusions.

(i) If $\zeta_1 = \Var(h_1(X_1)) > 0$, then
$$\sqrt{n}(V_n - E(h(X_{1} ,\ldots, X_{m})))\stackrel{d}{\rightarrow}N(0, m^2\zeta_1),$$
where $\zeta_k =\Var(h_k(X_1,\ldots,X_K))$ and
\begin{align*}
h_k(X_1,\ldots,X_K)=&E[h(X_1,\ldots,X_m)|X_1=x_1,\ldots,X_k=x_k]\\
=&E[h(x_1,\ldots,x_k,X_{k+1},\ldots,X_m)].
\end{align*}

(ii) If $\zeta_1 = 0$ but $\zeta_2 = \Var(h_2(X_1, X_2)) > 0$, then $n(V_n - E(h(X_{1} ,\ldots, X_{m})))\stackrel{d}{\rightarrow}\frac{m(m-1)}{2}\sum_{j=1}^{\infty}\lambda_j\mathcal{X}_{j}^2(1)$, where $\mathcal{X}_j^2(1)$'s, $j = 1, 2,\ldots$, are independently and identically distributed $\mathcal{X}^2$ random variables with $1$ degree of freedom and $\lambda_j$'s meet the condition $\sum_{j=1}^\infty \lambda_j^2=\zeta_2$.\\
\end{lemma}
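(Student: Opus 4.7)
The plan is to follow the classical Hoeffding projection argument for V-statistics, reducing both parts of the lemma to familiar limit theorems by isolating the leading term in an orthogonal expansion of the kernel. Without loss of generality I would first symmetrize $h$, since for any V-statistic we may replace $h$ by its symmetrization without changing $V_n$; this lets me refer unambiguously to the conditional projections $h_k$ in the statement.

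The first structural step is the Hoeffding decomposition. Define centered projections
\begin{equation*}
g_k(x_1,\ldots,x_k) = \sum_{S\subseteq\{1,\ldots,k\}} (-1)^{k-|S|} h_{|S|}(x_S) - \theta,
\end{equation*}
where $\theta = E[h(X_1,\ldots,X_m)]$, obtaining an orthogonal expansion $h = \theta + \sum_{k=1}^{m}\binom{m}{k}\,(\text{sum of }g_k\text{'s})$. The associated V-statistic splits accordingly into a constant $\theta$ plus $m$ completely degenerate V-statistics of orders $1,2,\ldots,m$. A direct variance calculation using the orthogonality of the $g_k$'s shows that the $k$-th term has variance of order $n^{-k}$, so under the second-moment assumption the $k\geq 3$ terms are negligible at both scales $\sqrt n$ and $n$.

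For part (i), the $k=1$ term is $m\cdot n^{-1}\sum_{i=1}^n(h_1(X_i)-\theta)$, to which the ordinary CLT applies and gives $\sqrt n$-asymptotic normality with variance $m^2\zeta_1$; the $k\geq 2$ terms are $O_p(n^{-1})$ and thus vanish after multiplication by $\sqrt n$. For part (ii), the linear term is identically zero in distribution (its variance is $m^2\zeta_1=0$), so the $k=2$ term dominates. I expand the degenerate symmetric $L^2$ kernel $g_2$ by Mercer's theorem as $g_2(x,y)=\sum_{j=1}^{\infty}\lambda_j\phi_j(x)\phi_j(y)$ with orthonormal eigenfunctions $\phi_j$ satisfying $E\phi_j(X)=0$ and $\sum_j\lambda_j^2=\zeta_2$. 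Substituting and separating diagonal from off-diagonal indices yields
\begin{equation*}
n\cdot\binom{m}{2} V_n^{(2)} \;=\; \binom{m}{2}\sum_{j=1}^{\infty}\lambda_j\Bigl(n^{-1/2}\sum_{i=1}^n\phi_j(X_i)\Bigr)^2 \;+\; R_n,
\end{equation*}
where $R_n$ collects the $n^{-1}$-order diagonal corrections. The multivariate CLT gives joint convergence of $(n^{-1/2}\sum_i\phi_j(X_i))_{j\leq J}$ to i.i.d.\ $N(0,1)$ for every finite $J$, so the truncated sum converges to $\binom{m}{2}\sum_{j=1}^J\lambda_j\mathcal{X}_j^2(1)$.

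The main obstacle is the justification of the infinite-series limit in part (ii): exchanging the limit with the sum over $j$. I would handle this by a standard truncation-plus-tightness argument, bounding the tail $\sum_{j>J}\lambda_j\bigl(n^{-1/2}\sum_i\phi_j(X_i)\bigr)^2$ in $L^1$ by $\sum_{j>J}\lambda_j\cdot E\phi_j^2(X)=\sum_{j>J}\lambda_j$, and showing $\sum_j|\lambda_j|<\infty$ or else using the Cauchy--Schwarz variant $\bigl(\sum_{j>J}\lambda_j^2\bigr)^{1/2}\to 0$ together with a uniform integrability bound; then applying a standard approximation lemma (e.g., Theorem 4.2 of Billingsley) to pass from the truncated to the full series. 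The residual $R_n$ is controlled by the law of large numbers applied to $n^{-1}\sum_i g_2(X_i,X_i)$ and absorbed into the limit constant. Since the result is a classical V-statistic limit theorem, the detailed bookkeeping follows \cite{lee2019u}, and I would cite it rather than reproduce the full technical apparatus.
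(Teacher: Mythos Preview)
Your outline is a correct sketch of the classical proof of the V-statistic limit theorems via Hoeffding decomposition, the CLT on the linear term, and a Mercer expansion of the degenerate second-order kernel. However, the paper does not actually prove this lemma: it is stated as a known auxiliary result and attributed to \cite{lee2019u} (``we give an lemma and some notations as follows \cite{lee2019u}''), with no argument supplied. So there is no ``paper's own proof'' to compare against; the paper simply quotes the result, and your closing remark that you would cite \cite{lee2019u} rather than reproduce the details is exactly what the paper does. If anything, your write-up goes further than the paper by indicating \emph{why} the result holds.
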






\begin{proof}[\textbf{Proof of Theorem \ref{the1.3}}]

Denote $Z_i=(X_i, Y_i)$, $Z_j=(X_j, Y_j)$, $Z_k=(X_k, Y_k)$, $Z_{k^\prime}=(X_{k^\prime}$, $Y_{k^\prime})$. We consider the statistic with the known parameter $p_r$ 
\begin{equation*}
\begin{aligned}
I_n=&\frac{1}{n^2}\sum_{r=1}^R\sum_{i=1}^n\sum_{j=1}^n p_r[\frac{1}{n}\sum_{k=1}^n \frac{I(X_k\in\bar{B}(X_i,X_j),Y_k=y_r)}{p_r}\\
&\quad\quad\quad\quad\quad\quad\quad
-\frac{1}{n}\sum_{k=1}^n I(X_k\in\bar{B}(X_i,X_j)]^2\\
=&\frac{1}{n^4}\sum_{r=1}^R\sum_{i=1}^n\sum_{j=1}^n\sum_{k,k^\prime=1}^n[\frac{1}{p_r}I(X_k\in\bar{B}(X_i,X_j),Y_k=y_r)\\
&\quad\quad\quad\quad\quad\quad\quad\quad\quad\times
I(X_{k^\prime}\in\bar{B}(X_i,X_j),Y_{k^\prime}=y_r) \\
&+p_r I(X_k\in\bar{B}(X_i,X_j))I(X_{k^\prime}\in\bar{B}(X_i,X_j)) \\
&-I(X_k\in\bar{B}(X_i,X_j),Y_k=y_r)I(X_{k^\prime}\in\bar{B}(X_i,X_j))  \\ 
&-I(X_{k^\prime}\in\bar{B}(X_i,X_j),Y_{k^\prime}=y_r)I(X_k\in\bar{B}(X_i,X_j))].
\end{aligned}
\end{equation*}
Let $V_n^{p_r}=\frac{1}{n^4}\sum_{i,j,k,k^\prime=1}^n \Psi^{(r)}(Z_i,Z_j,Z_k,Z_{k^\prime})$ and
\begin{equation*}
\begin{aligned}
&\Psi^{(r)}(Z_i,Z_j,Z_k,Z_{k^\prime}) \\ =&\frac{1}{p_r}I(X_k\in\bar{B}(X_i,X_j),Y_k=y_r)I(X_{k^\prime}\in\bar{B}(X_i,X_j),Y_{k^\prime}=y_r)\\
&+ p_r I(X_k\in\bar{B}(X_i,X_j))I(X_{k^\prime}\in\bar{B}(X_i,X_j))\\
&-I(X_k\in\bar{B}(X_i,X_j),Y_k=y_r)I(X_{k^\prime}\in\bar{B}(X_i,X_j)) \\
&-I(X_{k^\prime}\in\bar{B}(X_i,X_j),Y_{k^\prime}=y_r)I(X_k\in\bar{B}(X_i,X_j)),
\end{aligned}
\end{equation*}
then we have 
\begin{align*}
    I_n =& \sum_{r=1}^R V_n^{(p_r)}\\
    =&\sum_{r=1}^R\frac{1}{n^4}\sum_{i,j,k,k^\prime=1}^n \Psi^{(r)}(Z_i,Z_j,Z_k,Z_{k^\prime}).
\end{align*}
We would like to use V statistic to obatain the asymptotic properties of $\widehat{MDD}(X|Y)$. We symmetrize the kernel $\Psi^{(r)}(Z_i,Z_j,Z_k,Z_{k^\prime})$ and denote
\begin{equation*}
\begin{aligned}
    &\Psi^{(r)}_S(Z_i,Z_j,Z_k,Z_{k^\prime})\\ =&\frac{1}{4!}\sum_{\tau\in\pi(i,j,k,k^\prime)}\Psi^{(r)}(Z_{\tau(1)},Z_{\tau(2)},Z_{\tau(3)},Z_{\tau(4)}),
\end{aligned}
\end{equation*}
where $\pi(i,j,k,k^\prime)$ are the permutations of $\{i,j,k,k^\prime\}$. Now, the kernel $\Psi^{(r)}_S(Z_i,Z_j,Z_k,Z_{k^\prime})$ is symmetric, and $\frac{1}{n^4}\sum_{i,j,k,k^\prime=1}^n\Psi^{(r)}_S(Z_i,Z_j,Z_k,Z_{k^\prime})$ should be a V-statistic. 
By using the denotation of Lemma \ref{lm6.4}, we should consider $E[\Psi^{(r)}_S(z_i,Z_j,Z_k,Z_{k^\prime})]$, that is, the case where only one random variable fixed its value. And, we have to consider $E[\Psi^{(r)}(z_i,Z_j,Z_k,Z_{k^\prime})]$,  $E[\Psi^{(r)}(Z_i,z_j,Z_k,Z_{k^\prime})]$, 
$E[\Psi^{(r)}(Z_i,Z_j,z_k,Z_{k^\prime})]$ and  $E[\Psi^{(r)}(Z_i,Z_j,Z_k,z_{k^\prime})]$.

We consider
\begin{equation*}
\begin{aligned}
&E[\Psi^{(r)}(z_i,Z_j,Z_k,Z_{k^\prime})]\\
=&\frac{1}{p_r}E[I(X_k\in\bar{B}(x_i,X_j),Y_k=y_r)\\
&\quad\quad I(X_{k^\prime}\in\bar{B}(x_i,X_j),Y_{k^\prime}=y_r)]\\
&+ p_r E[I(X_k\in\bar{B}(x_i,X_j))I(X_{k^\prime}\in\bar{B}(x_i,X_j))]\\
&-E[I(X_k\in\bar{B}(x_i,X_j),Y_k=y_r)I(X_{k^\prime}\in\bar{B}(x_i,X_j))] \\
&- E[I(X_{k^\prime}\in\bar{B}(x_i,X_j),Y_{k^\prime}=y_r)I(X_k\in\bar{B}(x_i,X_j))]\\
=& \frac{1}{p_r}P_{j,k,k^\prime}[X_k,X_{k^\prime}\in\bar{B}(x_i,X_j),Y_k=Y_{k^\prime}=y_r)]\\
&+p_rP_{j,k,k^\prime}[X_k,X_{k^\prime}\in\bar{B}(x_i,X_j)]\\
&- P_{j,k,k^\prime}[X_k,X_{k^\prime}\in\bar{B}(x_i,X_j), Y_k=y_r]\\
&-P_{j,k,k^\prime}[X_k,X_{k^\prime}\in\bar{B}(x_i,X_j), Y_{k^\prime}=y_r],
\end{aligned}
\end{equation*}
where $P_{j,k,k^\prime}$ means the probability of $Z_j$, $Z_k$ and $Z_{k^\prime}$.

Under the null hypothesis $H_0$, $X$ and $Y$ are independent. Then we have
\begin{equation*}
\begin{aligned}
&\frac{1}{p_r}P_{j,k,k^\prime}[X_k,X_{k^\prime}\in\bar{B}(x_i,X_j),Y_k=Y_{k^\prime}=y_r]\\
=&\frac{1}{p_r}P_{j,k,k^\prime}[X_k,X_{k^\prime}\in\bar{B}(x_i,X_j),Y_k=y_r]P_{k^\prime}[Y_{k^\prime}=y_r]\\
=&P_{j,k,k^\prime}[X_k,X_{k^\prime}\in\bar{B}(x_i,X_j),Y_k=y_r],
\end{aligned}
\end{equation*}
and 
\begin{equation*}
\begin{aligned}
&p_rP_{j,k,k^\prime}[X_k,X_{k^\prime}\in\bar{B}(x_i,X_j)]\\
=&P_{j,k,k^\prime}[X_k,X_{k^\prime}\in\bar{B}(x_i,X_j),Y_{k^\prime}=y_r],
\end{aligned}
\end{equation*}
Thus, $E[\Psi^{(r)}(z_i,Z_j,Z_k,Z_{k^\prime})]=0$.

Similarly, we have $E[\Psi^{(r)}(Z_i,z_j,Z_k,Z_{k^\prime})]=0$, $E[\Psi^{(r)}(Z_i,Z_j,z_k,Z_{k^\prime})]=0$ and $E[\Psi^{(r)}(Z_i,Z_j,Z_k,z_{k^\prime})]=0$ because of the independence of $X$ and $Y$ under the null hypothesis $H_0$. 

Next, we consider the case when two random elements are fixed. 
\begin{equation*}
\begin{aligned}
&E[\Psi^{(r)}(Z_i,Z_j,z_k,z_{k^\prime})]\\
=&\frac{1}{p_r}E[I(x_k\in\bar{B}(X_i,X_j),y_k=y_r)\\
&\quad\quad\times
I(x_{k^\prime}\in\bar{B}(X_i,X_j),y_{k^\prime}=y_r)]\\
&+ p_r E[I(x_k\in\bar{B}(X_i,X_j))I(x_{k^\prime}\in\bar{B}(X_i,X_j))]\\
&-E[I(x_k\in\bar{B}(X_i,X_j),y_k=y_r)I(x_{k^\prime}\in\bar{B}(X_i,X_j))] \\
&- E[I(x_{k^\prime}\in\bar{B}(X_i,X_j),y_{k^\prime}=y_r)I(x_k\in\bar{B}(X_i,X_j))]\\
=& \frac{1}{p_r}P_{i,j}[x_k,x_{k^\prime}\in\bar{B}(X_i,X_j),y_k=y_{k^\prime}=y_r)]\\
&+p_rP_{i,j}[x_k,x_{k^\prime}\in\bar{B}(X_i,X_j)]\\
&- P_{i,j}[x_k,x_{k^\prime}\in\bar{B}(X_i,X_j), y_k=y_r]\\
&-P_{i,j}[x_k,x_{k^\prime}\in\bar{B}(X_i,X_j), y_{k^\prime}=y_r],
\end{aligned}
\end{equation*}

$E[\Psi^{(r)}(Z_i,Z_j,z_k,z_{k^\prime})]$ is a non-constant function related to $z_k, z_{k^\prime}$. 

In addition, we know 
\begin{equation*}
\begin{aligned}
&E[\Psi^{(r)}(Z_i,Z_j,Z_k,Z_{k^\prime})]\\
=&\frac{1}{p_r}E[I(X_k\in\bar{B}(X_i,X_j),Y_k=y_r)\\
&\quad\quad\times
I(X_{k^\prime}\in\bar{B}(X_i,X_j),Y_{k^\prime}=y_r)]\\
&+ p_r E[I(X_k\in\bar{B}(X_i,X_j))I(X_{k^\prime}\in\bar{B}(X_i,X_j))]\\
&-E[I(X_k\in\bar{B}(X_i,X_j),Y_k=y_r)I(X_{k^\prime}\in\bar{B}(X_i,X_j))] \\
&- E[I(X_{k^\prime}\in\bar{B}(X_i,X_j),Y_{k^\prime}=y_r)I(X_k\in\bar{B}(X_i,X_j))]\\
=& \frac{1}{p_r}P_{i,j,k,k^\prime}[X_k,X_{k^\prime}\in\bar{B}(X_i,X_j),Y_k=Y_{k^\prime}=y_r)]\\
&+p_rP_{i,j,k,k^\prime}[X_k,X_{k^\prime}\in\bar{B}(X_i,X_j)]\\
&- P_{i,j,k,k^\prime}[X_k,X_{k^\prime}\in\bar{B}(X_i,X_j), Y_k=y_r]\\
&-P_{i,j,k,k^\prime}[X_k,X_{k^\prime}\in\bar{B}(X_i,X_j), Y_{k^\prime}=y_r] = 0,
\end{aligned}
\end{equation*}
The last equation is derived from the independence of $X$ and $Y$. By Lemma \ref{lm6.4} $(ii)$, $V_{n}^{(p_r)}$ is a limiting $\chi^2$-type V statistic.

Now, we consider $V_{n}^{(\hat{p}_r)}$. Let $t=(t_1,t_2)$. In showing the conditions of Theorem 2.16 in \cite{de1987effect} hold, we use 
$$
\begin{aligned}
h(z_1,z_2; p_r)=p_r\int g(z_1,t;p_r)g(z_2,t;p_r)dM(t),
\end{aligned}
$$
where $g(z,t;\gamma)=\sqrt{p_r}(I(z\in\bar{B}(t_1,t_2),y_1=y_r)/\gamma-I(z\in\bar{B}(t_1,t_2)))$ and $ = \nu\otimes\nu$ is the product measure of $\nu$ with respect to $X$.

Thus, 
\begin{align*}
    \mu(t;\gamma)=&Eg(Z,t;\gamma)\\
    =&\sqrt{\gamma}(P(X\in\bar{B}(t_1,t_2))p_r/\gamma-P(X\in\bar{B}(t_1,t_2)))
\end{align*}
and
\begin{align*}
    \mathbf{d}_1\mu(t;p_r)=p_r^{-\frac{1}{2}}P(X\in\bar{B}(t_1,t_2)).
\end{align*}
The condition of Theorem 2.16 in \cite{de1987effect} can be shown to hold in this case using 
\begin{align*}
   h_* &(Z_1,Z_2)\\
    =\int &[g(Z_1,t;p_r)+\mathbf{d}_1\mu(t;p_r)(I(Y_1=y_r)-p_r)]\\
    &[g(Z_2,t;p_r)+\mathbf{d}_1\mu(t;p_r)(I(Y_2=y_r)-p_r)]dM(t).
\end{align*}
Let $\{\lambda_i\}$ denote the eigenvalues of the operator $A$ defined by
\begin{align*}
    Aq(x)=\int h_*(z_1,z_2)q(y)d\nu(x_2)dP(y_2),
\end{align*}
then
\begin{align*}
nV_{n}^{(\hat{p}_r)}\stackrel{d}{\rightarrow}\sum_{j=1}^\infty\lambda_j^{(r)}\mathcal{X}_j^2(1), 
\end{align*}
where $\mathcal{X}_j^2(1)'s$, $j= 1,2,\ldots$, are independently and identically distributed chi-square distribution with 1 degree of freedom.

Notice that $\widehat{MDD}(X|Y)=\sum_{r=1}^R V_n^{(r)}$, according to the independence of the sample and the additivity of chi-square distribution,
\begin{equation*}
n\widehat{MDD}(X|Y)\stackrel{d}{\rightarrow}\sum_{j=1}^\infty\lambda_j\mathcal{X}_j^2(1),   
\end{equation*}
where $\lambda_j = \sum_{r=1}^R \lambda_j^{(r)}$.



\end{proof}



\begin{proof}[\textbf{Proof of Theorem \ref{the1.4}}]
Under the alternative hypothesis $H_1$, $\widehat{MDD}(X|Y)\stackrel{a.s.}{\longrightarrow}MDD(X|Y) > 0$, as $n\rightarrow\infty$. Thus, we have $n\widehat{MDD}(X|Y)\stackrel{a.s.}{\longrightarrow}\infty$, as $n\rightarrow\infty$.
\end{proof}

\begin{proof}[\textbf{Proof of Theorem \ref{the1.5}}]
We consider
\begin{equation*}
\widehat{MDD}(X|Y) = I_n + J_n ,
\end{equation*}
where 
$$
\begin{aligned}
I_n=&\sum_{r=1}^R \frac{1}{n^4}\sum_{i,j,k,k^\prime=1}^n\frac{1}{p_r}I(X_k\in\bar{B}(X_i,X_j),Y_k=y_r)\\
&\quad\quad\quad\quad\quad\quad\quad\times
I(X_{k^\prime}\in\bar{B}(X_i,X_j),Y_{k^\prime}=y_r)\\
+&\sum_{r=1}^R \frac{1}{n^4}\sum_{i,j,k,k^\prime=1}^np_r I(X_k\in\bar{B}(X_i,X_j))I(X_{k^\prime}\in\bar{B}(X_i,X_j)) \\
-&\sum_{r=1}^R \frac{1}{n^4}\sum_{i,j,k,k^\prime=1}^n I(X_k\in\bar{B}(X_i,X_j),Y_k=y_r)\\
&\quad\quad\quad\quad\quad\quad\quad\times
I(X_{k^\prime}\in\bar{B}(X_i,X_j))\\
-&\sum_{r=1}^R \frac{1}{n^4}\sum_{i,j,k,k^\prime=1}^nI(X_{k^\prime}\in\bar{B}(X_i,X_j),Y_{k^\prime}=y_r)\\
&\quad\quad\quad\quad\quad\quad\quad\times
I(X_k\in\bar{B}(X_i,X_j))
\end{aligned}
$$
and
$$
\begin{aligned}
J_n=&\sum_{r=1}^R(\frac{1}{\hat{p}_r}-\frac{1}{p_r})\frac{1}{n^4}\sum_{i,j,k,k^\prime=1}^{n}I(X_k\in\bar{B}(X_i,X_j),Y_k=y_r)\\
&\quad\quad\quad\quad\quad\quad\quad\times
I(X_{k^\prime}\in\bar{B}(X_i,X_j),Y_{k^\prime}=y_r)\\
+& \sum_{r=1}^R(\hat{p}_r - p_r)\frac{1}{n^4}\sum_{i,j,k,k^\prime=1}^{n}I(X_k\in\bar{B}(X_i,X_j))\\
&\quad\quad\quad\quad\quad\quad\quad\times
I(X_{k^\prime}\in\bar{B}(X_i,X_j)).
\end{aligned}
$$

We consider $E[\Psi^{(r)}(z_i,Z_j,Z_k,Z_{k^\prime})]$ in the proof of Theorem \ref{the1.3},
\begin{equation*}
\begin{aligned}
&E[\Psi^{(r)}(z_i,Z_j,Z_k,Z_{k^\prime})]\\
=& \frac{1}{p_r}P_{j,k,k^\prime}[X_k,X_{k^\prime}\in\bar{B}(x_i,X_j),Y_k=Y_{k^\prime}=y_r)]\\
&+p_rP_{j,k,k^\prime}[X_k,X_{k^\prime}\in\bar{B}(x_i,X_j)]\\
&- P_{j,k,k^\prime}[X_k,X_{k^\prime}\in\bar{B}(x_i,X_j), Y_k=y_r]\\
&-P_{j,k,k^\prime}[X_k,X_{k^\prime}\in\bar{B}(x_i,X_j), Y_{k^\prime}=y_r].
\end{aligned}
\end{equation*}

Under the alternative hypothesis $H_1$, $X$ and $Y$ is not independent of each other, i.e.
\begin{equation*}
\begin{aligned}
&\frac{1}{p_r}P_{j,k,k^\prime}[X_k,X_{k^\prime}\in\bar{B}(x_i,X_j),Y_k=Y_{k^\prime}=y_r]\\
\neq &P_{j,k,k^\prime}[X_k,X_{k^\prime}\in\bar{B}(x_i,X_j),Y_k=y_r],
\end{aligned}
\end{equation*}
and 
\begin{equation*}
\begin{aligned}
&p_rP_{j,k,k^\prime}[X_k,X_{k^\prime}\in\bar{B}(x_i,X_j)]\\
\neq& P_{j,k,k^\prime}[X_k,X_{k^\prime}\in\bar{B}(x_i,X_j),Y_{k^\prime}=y_r],
\end{aligned}
\end{equation*}
so $E[\Psi^{(r)}(z_i,Z_j,Z_k,Z_{k^\prime})]$ is a non-constant function related to $z_i$, and we know $h_1^{(r)}(Z_1)=\frac{1}{4}[E[\Psi^{(r)}(z_i,Z_j,Z_k,Z_{k^\prime})]+E[\Psi^{(r)}(Z_i,z_j,Z_k,Z_{k^\prime})]+E[\Psi^{(r)}(Z_i,Z_j,z_k,Z_{k^\prime})]+E[\Psi^{(r)}(Z_i,Z_j,Z_k,z_{k^\prime})]]$, then we have $\Var[h_1^{(r)}(Z_1)]>0$.
We apply Lemma \ref{lm6.4} $(i)$, we have
\begin{equation*}
\sqrt{n}[V_n^{(p_r)}-E[\Psi^{(r)}(Z_i,Z_j,Z_k,Z_{k^\prime})]]\stackrel{d}{\rightarrow}N(0,16\Var[h_1^{(r)}(Z_1)]),
\end{equation*}

Because $I_n = \sum_{r=1}^R V_n^{(p_r)}$, $MDD(X|Y)=E[\sum_{r=1}^R p_r(F_r(X_i,X_j)-F(X_i,X_j))^2]$ and 
\begin{equation*}
\begin{aligned}
&E[\Psi^{(r)}(Z_i,Z_j,Z_k,Z_{k^\prime})]\\
=&E[E[\Psi^{(r)}(Z_i,Z_j,Z_k,Z_{k^\prime})|Z_i,Z_j]]\\
=&\frac{1}{p_r}E[E[I(X_k\in\bar{B}(X_i,X_j),Y_k=y_r)|Z_i,Z_j]\\
&\quad\quad
E[I(X_{k^\prime}\in\bar{B}(X_i,X_j),Y_{k^\prime}=y_r)|Z_i,Z_j]]\\
&+ p_r E[E[I(X_k\in\bar{B}(X_i,X_j))|Z_i,Z_j]\\
&\quad\quad
E[I(X_{k^\prime}\in\bar{B}(X_i,X_j))|Z_i,Z_j]]\\
&-E[E[I(X_k\in\bar{B}(X_i,X_j),Y_k=y_r)|Z_i,Z_j]\\
&\quad\quad\quad E[I(X_{k^\prime}\in\bar{B}(X_i,X_j))|Z_i,Z_j]] \\
&- E[E[I(X_{k^\prime}\in\bar{B}(X_i,X_j),Y_{k^\prime}=y_r)|Z_i,Z_j]\\
&\quad\quad E[I(X_k\in\bar{B}(X_i,X_j))|Z_i,Z_j]]\\
=&\frac{1}{p_r}E[E^2[I(X_k\in\bar{B}(X_i,X_j),Y_k=y_r)|Z_i,Z_j]] \\
&+ p_r E[E^2[I(X_k\in\bar{B}(X_i,X_j))|Z_i,Z_j]]\\
&-2E[E[I(X_k\in\bar{B}(X_i,X_j),Y_k=y_r)|Z_i,Z_j]\\
&\quad\quad\quad
E[I(X_k\in\bar{B}(X_i,X_j))|Z_i,Z_j]]\\
=&p_rE[(\frac{1}{p_r}E[I(X_k\in\bar{B}(X_i,X_j),Y_k=y_r)|Z_i,Z_j]\\
&-E[I(X_k\in\bar{B}(X_i,X_j))|Z_i,Z_j])^2]\\
=&E[p_r(F_r(X_i,X_j)-F(X_i,X_j))^2],
\end{aligned}   
\end{equation*}
we have
$$
E[\Psi^{(r)}(Z_i,Z_j,Z_k,Z_{k^\prime})]]=\sum_{r=1}^R MDD(X|Y)
$$
and
\begin{equation*}
\sqrt{n}[I_n-MDD(X|Y)]\stackrel{d}{\rightarrow}N(0,\sigma_I^2),
\end{equation*}
where $\sigma_I^2 = 
\sum_{r=1}^R16\Var[h_1^{(r)}(Z_i)]+2n\sum_{i<j}Cov(V_n^{(i)},V_n^{(j)})$. We explain here that $Cov(V_n^{(i)},V_n^{(j)})$ is the covariance of V-statistic $V_n^{(i)}$ and $V_n^{(j)}$ of order 4, which can be written as $Cov(V_n^{(i)},V_n^{(j)})=\frac{1}{n^4}\sum_{c=1}^4\binom{4}{c}(n-4)^{4-c}\sigma_{c,c}^2$, where $\sigma_{c,c}^2=Cov(h_c^{(p)}(Z_1,\ldots,Z_c),h_c^{(q)}(Z_1,\ldots,Z_c))$ and $h_c^{(p)}(Z_1,\ldots,Z_c)$ represents $h_c(Z_1,\ldots,Z_c)$ of Lemma \ref{lm6.4} when $r=p$.

Next, we would like to know the asymptotic distribution of $\sqrt{n}[\widehat{MDD}(X|Y)-MDD(X|Y)]$, we need to consider the asymptotic distribution of $\sqrt{n}J_n$ as follows
\begin{equation*}
\begin{aligned}
&\sqrt{n}[\widehat{MDD}(X|Y)-MDD(X|Y)] \\
=& \sqrt{n}[I_n+J_n-MDD(X|Y)]. 
\end{aligned}
\end{equation*}

Then, we consider $J_n$.
Denote $V_1^{(r)}=\frac{1}{n^4}\sum_{i,j,k,k^\prime=1}^{n}I(X_k\in\bar{B}(X_i,X_j),Y_k=y_r)I(X_{k^\prime}\in\bar{B}(X_i,X_j),Y_{k^\prime}=y_r)$ and $V_2=\frac{1}{n^4}\sum_{i,j,k,{k^\prime}=1}^{n}I(X_k\in\bar{B}(X_i,X_j))I(X_{k^\prime}\in\bar{B}(X_i,X_j))$, we have 
\begin{equation*}
J_n=\sum_{r=1}^R(\hat{p}_r-p_r)(V_2-\frac{1}{\hat{p}_rp_r}V_1^{(r)}).
\end{equation*}
The asymptotic distribution of $V_2-\frac{1}{\hat{p}_rp_r}V_1^{(r)}$ is a constant, because we know
$E[V_1^{(r)}]=E[I(X_k\in\bar{B}(X_i,X_j),Y_k=y_r)I(X_{k^\prime}\in\bar{B}(X_i,X_j),Y_{k^\prime}=y_r)]=p_r^2E[F_r^2(X_i,X_j)]$, $E[V_2]=E[I(X_k\in\bar{B}(X_i,X_j)I(X_{k^\prime}\in\bar{B}(X_i,X_j)]=E[F^2(X_i,X_j)]$ and $\hat{p}_r\rightarrow p_r$. Then 
$$
V_2-\frac{1}{\hat{p}_rp_r}V_1^{(r)}\rightarrow E[F^2(X_i,X_j)]-E[F_r^2(X_i,X_j)].
$$

According to the Central Limit Theorem (CLT), for arbitrary $r=1,\ldots,R$, we have  
$$
\sqrt{n}(\hat{p}_r-p_r)(V_2-\frac{1}{\hat{p}_rp_r}V_1^{(r)})\rightarrow N(0,\sigma_r^2),
$$
where $\sigma_r^2 = p_r(1-p_r)(E[F^2(X_i,X_j)]-E[F_r^2(X_i,X_j)])$.

Let $\mathbf{\hat{p}}^{(i)}=(I(Y_i=y_1),I(Y_i=y_2),\ldots,I(Y_i=y_R))^T$, where $\mathbf{\hat{p}}^{(i)}$ is a $R$-dimensional random variable and $\mathbf{\hat{p}}^{(1)}, \mathbf{\hat{p}}^{(2)},\ldots, \mathbf{\hat{p}}^{(n)}$ is dependent of each other. Therefore, according to multidimensional CLT, $\sqrt{n}(\frac{1}{n}\sum_{i=1}^n\mathbf{\hat{p}}^{(i)}-E[\mathbf{\hat{p}}^{(i)}])= (\sqrt{n}(\hat{p}_1-p_1),\ldots,\sqrt{n}(\hat{p}_R-p_R))^T$ asymptotically obey the $R$-dimensional normal distribution. In this way we get the condition of the additivity of the normal distribution, then we have
$$
\sqrt{n}J_n\rightarrow N(0,\sigma_J^2),
$$
where $\sigma_J^2 = \sum_{r=1}^R\sigma_r^2+2\sum_{i<j}Cov(\hat{p}_i,\hat{p}_j)= \sum_{r=1}^R\sigma_r^2-\frac{2}{n}\sum_{i<j}p_ip_j
=\sum_{r=1}^Rp_r(1-p_r)(E[F^2(X_i,X_j)]-E[F_r^2(X_i,X_j)])-\frac{2}{n}\sum_{i<j}p_ip_j$.

Similarly, we can use multidimensional CLT to prove $\sqrt{n}J_n$ and $\sqrt{n}[I_n-MV(X|Y)]$ are bivariate normal distribution. Then, we make a conclusion, 
\begin{equation*}
\sqrt{n}[\widehat{MDD}(X|Y)-MDD(X|Y)]\stackrel{d}{\rightarrow}N(0,\sigma^2),
\end{equation*}
where  $\sigma^2=\sigma_I^2+\sigma_J^2 + 2nCov(I_n,J_n)\sigma_I\sigma_J$. We explain here that $Cov(I_n,J_n)$ is the covariance of V-statistic $I_n$ of order 4 and $J_n$ of order 1, which can be written as $Cov(I_n,J_n)=Cov(\sum_{r=1}^R V_n^{(r)},\sum_{r=1}^R(\hat{p}_r-p_r)(V_2-\frac{1}{\hat{p}_rp_r}V_1^{(r)}))=\sum_{r=1}^R\sum_{r^{\prime}=1}^RCov(V_n^{(r)},(\hat{p}_{r^{\prime}}-p_{r^{\prime}})(V_2-\frac{1}{\hat{p}_{r^{\prime}}p_{r^{\prime}}}V_1^{(r)}))=\sum_{r=1}^R\sum_{r^{\prime}=1}^R\frac{1}{n}\binom{4}{1}(n-4)^3Cov(h_1^{(r)}(Z_1),I(Y_k=y_{r^{\prime}}))(E[F^2(X_i,X_j)]-E[F_r^2(X_i,X_j)])$.
\end{proof}


\bibliographystyle{imsart-nameyear}
\bibliography{mdd_v1}          
%
%

\end{document}